%
%
%
%
%
%
%
\documentclass[%
 reprint,
 amsmath,amssymb,
 aps,
]{revtex4-1}

\usepackage{amsthm}
\usepackage{graphicx}
\usepackage{dcolumn}
\usepackage{bm}
\usepackage{amsthm}


\begin{document}

\preprint{APS/123-QED}

\title{Strong quantum nonlocality for multipartite entangled states}

\author{Zhi-Chao Zhang$^{1}$}
  \email{zhichao858@126.com}
\author{Guojing Tian$^{2}$}
  \email{tianguojing@ict.ac.cn}
  \author{Tian-Qing Cao$^{3}$}

\affiliation{%
 $^{1}$School of Mathematics and Physics, University of Science and Technology Beijing, Beijing, 100083, China\\
$^{2}$Institute of Computing Technology, Chinese Academy of Sciences, 100190, Beijing, China\\
$^{3}$School of Mathematical Sciences, Tiangong University, Tianjin, 300387, China\\}

\date{\today}

\begin{abstract}
Recently, Halder \emph{et al.} [S. Halder \emph{et al.}, Phys. Rev. Lett. \textbf{122}, 040403 (2019)] present two sets of strong nonlocality of orthogonal product states based on the local irreducibility. However, for a set of locally indistinguishable orthogonal entangled states, the remaining question is whether the states can reveal strong quantum nonlocality. Here we present a general definition of strong quantum nonlocality based on the local indistinguishability. Then, in $2 \otimes 2 \otimes 2$ quantum system, we show that a set of orthogonal entangled states is locally reducible but locally indistinguishable in all bipartitions, which means the states have strong nonlocality. Furthermore, we generalize the result in N-qubit quantum system with $N\geqslant 3$. Finally, we also construct a class of strong nonlocality of orthogonal entangled states in $d\otimes d\otimes \cdots \otimes d, d\geqslant 3$. Our results extend the concept of strong nonlocality for entangled states.
\begin{description}
\item[PACS numbers]
03.67.Hk, 03.65.Ud, 03.67.Mn
\end{description}
\end{abstract}

\pacs{Valid PACS appear here}
\maketitle


\section{\label{sec:level1}Introduction\protect}
Quantum nonlocality is one of the most important properties in quantum information theory, and the most well-known manifestation of quantum nonlocality is Bell nonlocality [1], which means a quantum state violates Bell-type inequalities [2-5]. Apart from the foundational implications, Bell nonlocality has many applications in quantum technologies [6-8]. 

In fact, nonlocal properties also have another class which is different from Bell nonlocality. Specifically, when a set of orthogonal quantum states cannot be perfectly distinguished by local operations and classical communication (LOCC), it reflects the fundamental feature of quantum mechanics which is called nonlocality [9]. Locally distinguishing orthogonal quantum states refers to that, in a quantum system which consists of several parts held by separated observers, a state secretly chosen from a set of prespecified orthogonal quantum states is shared by these parties, and the goal is to determine the shared state, only using LOCC [10-21]. In addition, the nonlocality of orthogonal quantum states can be used for practical purposes, such as, data hiding [22,23], quantum secret sharing [24], and so on. Thus, in the past two decades, it has received considerable attention to study the locally distinguishability of orthogonal quantum states and explore the relationship between quantum nonlocality and quantum entanglement [25-39]. 

All the above results prompt our researchers to make further study on more extensive quantum nonlocality. Recently, in $d \otimes d \otimes d, d=3,4$, Halder \emph{et al.} proved that two sets of orthogonal product states are strongly nonlocal because these states are locally irreducible in all bipartitions [40]. Here, local irreducibility means that it is impossible to locally eliminate one or more states from the set of states while preserving orthogonality of the post measurement states. Then, the authors in [41] present the general definition of strong quantum nonlocality based on the local irreducibility.  However, the local irreducibility is only a sufficient but not necessary condition for local indistinguishability. Furthermore, for orthogonal product states, the authors in [42,43] present the classification based on the local distinguishability when all the parties are spatially separated or different subsets of the parties come together, and design the local state discrimination protocols for such states with additional entangled recources. 

Thus, the above results leave the following questions: (i) for entangled states, how to define the property of strong quantum nonlocality based on the local indistinguishability, (ii) whether we can find strongly nonlocal orthogonal entangled states, and how many states can have strong nonlocality.

Motivated by the above questions, in this work, we first present the general definition of strong quantum nonlocality based on the local inditinguishability, and our definition is more general than the definition in [40] because the local irreducibility is only a sufficient but not necessary condition for local indistinguishability. Then, in $2 \otimes 2 \otimes 2$, we show one set of strong nonlocality of orthogonal entangled states to explain the above definition, that is, these states are locally reducible but locally indistinguishable. Furthermore, in N-qubit quantum system, where $N\geqslant 3$, we generalize the above quantum states and prove that these states are strongly nonlocal. Finally, we also construct a class of strong nonlocality of entangled states in $d\otimes d\otimes \cdots \otimes d, d\geqslant 3$. In addition, our results can also let people have a better understanding of the relationship between entanglement and nonlocality. 

The rest of this paper is organized as follows. In Sec. \uppercase\expandafter{\romannumeral2}, we present a general definition of strong quantum nonlocality. Then, we show some sets of orthogonal entangled states are strongly nonlocal in N-qubit quantum system, where $N\geqslant 3$. Furthermore, in $d\otimes d\otimes \cdots \otimes d, d\geqslant 3$, we construct a class of strong nonlocality of entangled states in Sec. \uppercase\expandafter{\romannumeral3}. Finally, we summarize in Sec. \uppercase\expandafter{\romannumeral4}.

\theoremstyle{remark}
\newtheorem{definition}{\indent Definition}
\newtheorem{lemma}{\indent Lemma}
\newtheorem{theorem}{\indent Theorem}
\newtheorem{corollary}{\indent Corollary}

\def\QEDclosed{\mbox{\rule[0pt]{1.3ex}{1.3ex}}}
\def\QED{\QEDclosed}
\def\proof{\indent{\em Proof}.}
\def\endproof{\hspace*{\fill}~\QED\par\endtrivlist\unskip}

\section{The general definition of strong quantum nonlocality}

In this section, we present a general definition of strong quantum nonlocality based on the local indistinguishability of orthogonal quantum states. In Ref.[40], for $d_{1}\otimes d_{2}\otimes \cdots \otimes d_{n}$ quantum system, the authors have defined the local irreducibility of a set of orthogonal quantum states, which means that it is not possible to locally eliminate one or more states from the set while preserving orthogonality of the post measurement states. Then, based on the local irreducibility, the authors in [40, 41] defined the strong quantum nonlocality. From the definition of local irreducibility, we know that a set of locally irreducible orthogonal quantum states must be locally indistinguishable, so these states have nonlocality, but the converse does not always hold. Thus, we think the best appropriate definition of strong nonlocality should use the locally indistinguishability. 

In the following, we present a general definition for strong quantum nonlocality based on the locally indistinguishability. 

\begin{definition}
In $d_{1}\otimes d_{2}\otimes \cdots \otimes d_{n}, n\geqslant 3$, a set of orthogonal quantum states is strongly nonlocal if it is locally indistinguishable in every bipartition, where bipartition means the whole quantum system is divided into two parts. 
\end{definition}

From the Definition 1, we can know that it is different from the definition of strong nonlocality in [40] and more general. In [40], the authors present that the three-qubit GHZ basis (unnormalized): $|000\rangle \pm |111\rangle$, $|011\rangle \pm |100\rangle$, $|001\rangle \pm |110\rangle$, $|010\rangle \pm |101\rangle$ is locally reducible in every bipartition. Nevertheless, we find that these states are still locally indistinguishable in every bipartition. According to our definition, these states are strongly nonlocal. Thus, our definition is more general. In the following, we present that even some of three-qubit GHZ basis are strongly nonlocal. First, we present the following states in $2\otimes 2\otimes 2$:

\begin{eqnarray}
\label{eq.2}
\begin{split}
&|\phi_{1,2}\rangle=|000\rangle \pm |111\rangle, \\
&|\phi_{3}\rangle=|011\rangle + |100\rangle,\\
&|\phi_{4}\rangle=|001\rangle + |110\rangle, \\
&|\phi_{5}\rangle=|010\rangle + |101\rangle.\\
\end{split}
\end{eqnarray}

Then, we prove the above states are locally indistinguishable in every bipartition.

\begin{theorem}
In $2\otimes 2\otimes 2$,  the above $5$ states are strongly nonlocal.
\end{theorem}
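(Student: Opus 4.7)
The plan is to exploit the permutation symmetry of $\{|\phi_1\rangle,\ldots,|\phi_5\rangle\}$ under any transposition of the three qubits: the set is invariant, with $|\phi_1\rangle,|\phi_2\rangle$ fixed and $|\phi_3\rangle,|\phi_4\rangle,|\phi_5\rangle$ merely permuting among themselves. This makes the three bipartitions $A|BC$, $B|AC$ and $C|AB$ equivalent under relabeling, so it is enough to prove local indistinguishability in the cut $A|BC$.

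For that cut, I would first observe that it suffices to prove local indistinguishability of the three-state subset $\{|\phi_1\rangle,|\phi_2\rangle,|\phi_3\rangle\}$, since any LOCC protocol perfectly distinguishing all five states would, when restricted to inputs from this subset, distinguish its three members. The whole problem therefore reduces to showing that three Bell states of an effective $2\otimes 2$ system, embedded in $\mathcal{H}_A\otimes\mathcal{H}_{BC}\cong\mathbb{C}^{2}\otimes\mathbb{C}^{4}$ via the $2$-dimensional subspace $V_{\mathrm{even}}=\mathrm{span}\{|00\rangle,|11\rangle\}$ of $\mathcal{H}_{BC}$, cannot be distinguished by LOCC between Alice and $BC$.

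I would prove this by analyzing orthogonality-preserving POVM elements round by round. For Alice's element $M=\bigl(\begin{smallmatrix}a & c \\ \bar c & b\end{smallmatrix}\bigr)$ the three conditions $\langle\phi_i|M\otimes I_{BC}|\phi_j\rangle=0$ for $(i,j)=(1,2),(1,3),(2,3)$ force $a=b$ and $c=0$, so $M\propto I$. For a Hermitian $N$ on $\mathcal{H}_{BC}$ expanded in $\{|00\rangle,|01\rangle,|10\rangle,|11\rangle\}$, the analogous three conditions $\langle\phi_i|I_A\otimes N|\phi_j\rangle=0$ force $n_{00,00}=n_{11,11}=:p$ and $n_{00,11}=0$. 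From this one sees: (i) the probability $\langle\phi_i|I_A\otimes N|\phi_i\rangle=2p$ is the same for every $i\in\{1,2,3\}$, so no outcome assigns distinct conditional probabilities; (ii) any Kraus operator $E$ with $E^{\dagger}E=N$ satisfies $E|_{V_{\mathrm{even}}}=\sqrt{p}\,V$ for some isometry $V\colon V_{\mathrm{even}}\to\mathcal{H}_{BC}$, so the post-measurement states of $|\phi_1\rangle,|\phi_2\rangle,|\phi_3\rangle$ are again three orthogonal Bell states in the new $2$-dimensional subspace $V(V_{\mathrm{even}})$. The ``three-Bell-states-in-a-$2\otimes 2$-subspace'' structure is thus an invariant of every allowed round, and the same block analysis reapplies verbatim to the next round.

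Combining the two observations, no round can separate $|\phi_1\rangle,|\phi_2\rangle,|\phi_3\rangle$, so the subset (hence the full five-state set) is locally indistinguishable in $A|BC$; the permutation symmetry then closes $B|AC$ and $C|AB$. I expect the main obstacle to be verifying the isometric-invariant claim carefully, namely that from the identity $E^{\dagger}E|_{V_{\mathrm{even}}}=pI$ one can genuinely conclude $E|_{V_{\mathrm{even}}}$ is, up to scale, an isometry into the whole of $\mathcal{H}_{BC}$, and that this is enough to keep the post-measurement triple locally unitarily equivalent to the canonical three Bell states through arbitrarily many rounds of adaptive LOCC with classical feedback.
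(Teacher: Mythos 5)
Your proposal is correct and follows essentially the same route as the paper: reduce to a single one-versus-two bipartition by the permutation symmetry of the set, restrict to a three-state subset ($\{|\phi_1\rangle,|\phi_2\rangle,|\phi_3\rangle\}$ for $A|BC$; the paper uses $|\phi_4\rangle$ in the cut $AB|C$), and observe that across that cut these are three Bell states in an effective $2\otimes 2$ system. The only real difference is that the paper simply cites the known fact that three Bell states cannot be distinguished by LOCC [37,40], whereas you re-derive it by the orthogonality-preserving-measurement argument; your derivation is sound (including the isometry step you flag, since $E^{\dagger}E|_{V_{\mathrm{even}}}=pI$ does force $E|_{V_{\mathrm{even}}}=\sqrt{p}\,V$ with $V$ an isometry, so the three-Bell-states-in-a-$2\otimes2$-subspace structure and the equal outcome probabilities persist through every round), and your transposition-symmetry justification is a cleaner version of the paper's appeal to a ``cyclic property.''
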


\begin{proof}
First, we consider the bipartition $AB|C$. Physically this means that the subsystems $A$ and $B$ are treated together as a four-dimensional subsystem $AB$ on which joint measurements are now allowed. To reflect this, we rewrite the states $|\phi_{1,2,4}\rangle$ as follows:

\begin{eqnarray}
\label{eq.2}
\begin{split}
&|\phi_{1,2}\rangle=|0_2 0\rangle \pm |1_21\rangle, \\
&|\phi_{4}\rangle=|0_21\rangle + |1_20\rangle.\\
\end{split}
\end{eqnarray}
where $|0_2\rangle$ means the first $|00\rangle$, and $|1_2\rangle$ means first $|11\rangle$.

Then, the states (2) cannot be locally distinguished across $AB|C$ because the above states are locally equivalent to three Bell states $|00\rangle \pm |11\rangle$, $|01\rangle + |10\rangle$ which are locally indistinguishable [37, 40]. Thus, the states (1) cannot be locally distinguished across $AB|C$. 

From the structure of states (1), we know these states have a cyclic property as the cyclic property of the trace. Then,  the states (1) are also locally indistinguishable across $B|CA$ and $C|AB$. 

Therefore, the states (1) are strongly nonlocal.
This completes
the proof.          
\end{proof}

In the following, we generalize the above result in N-qubit quantum systems. In the same way, we first present some of N-qubit GHZ basis states as follows.

\begin{eqnarray}
\label{eq.2}
\begin{split}
&|\phi_{1}\rangle=|00\ldots0\rangle - |11\ldots1\rangle, \\
&|\phi_{2+a_{2}2^{N-2}+a_{3}2^{N-3}+\ldots+a_{N}2^0}\rangle=|a_{1}\ldots a_{N}\rangle + |\bar{a_{1}}\ldots \bar{a_{N}}\rangle.\\
\end{split}
\end{eqnarray}
where $a_{1}=0$, $a_{2},\ldots, a_{N}=0$ or $1$, $\bar{a_{i}}=(a_{i}+1)\bmod 2, i=1,\ldots, N$.

Then, we have the following result.

\begin{theorem}
In N-qubit quantum systems $2\otimes 2\otimes \ldots \otimes2$,  the above $2^{N-1}+1$ states of (3) are strongly nonlocal.
\end{theorem}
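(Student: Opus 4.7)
The plan is to lift the three-qubit argument of Theorem 1 to arbitrary $N$ by showing, for every bipartition $S|S^c$ of the $N$ qubits, that the set (3) contains a three-element subset whose structure across the bipartition is a copy of three Bell states sitting inside a $2\otimes 2$ subspace of $\mathbb{C}^{2^{|S|}}\otimes\mathbb{C}^{2^{|S^c|}}$. A first useful observation is that the set (3) is in fact permutation-symmetric in the $N$ qubits: the side condition ``$a_1=0$'' merely picks one representative from each unordered pair $\{a,\bar a\}$, and such pairs are stable under arbitrary permutations of qubits. Thus for an arbitrary bipartition we may assume without loss of generality that $1\in S$.

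For a fixed bipartition $S|S^c$ with $1\in S$, I would single out the three states $|\phi_1\rangle=|0^N\rangle-|1^N\rangle$, $|\phi_2\rangle=|0^N\rangle+|1^N\rangle$, and $|\psi_S\rangle=|0_S 1_{S^c}\rangle+|1_S 0_{S^c}\rangle$, where $0_S 1_{S^c}$ denotes the $N$-bit string that equals $0$ on the coordinates of $S$ and $1$ on the coordinates of $S^c$. Because $1\in S$, the first bit of $0_S 1_{S^c}$ is $0$, so $|\psi_S\rangle$ genuinely belongs to the set (3). Viewed across the bipartition, all three states lie inside the two-qubit subspace $V_S\otimes V_{S^c}$, where $V_S=\mathrm{span}(|0^{|S|}\rangle,|1^{|S|}\rangle)$ and $V_{S^c}=\mathrm{span}(|0^{|S^c|}\rangle,|1^{|S^c|}\rangle)$; identifying $V_S\otimes V_{S^c}$ with $\mathbb{C}^2\otimes\mathbb{C}^2$ by sending these two-element bases to the computational basis of each factor, the three states become (up to normalization) the Bell states $|00\rangle-|11\rangle$, $|00\rangle+|11\rangle$, and $|01\rangle+|10\rangle$.

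To conclude, I would invoke the standard fact that any three of the four Bell states on $\mathbb{C}^2\otimes\mathbb{C}^2$ cannot be perfectly distinguished by LOCC [37, 40], together with a short embedding argument: the inclusions $V_S\hookrightarrow \mathbb{C}^{2^{|S|}}$ and $V_{S^c}\hookrightarrow \mathbb{C}^{2^{|S^c|}}$ are local isometries and hence realizable by LOCC (append fixed ancillas, then apply a local unitary), so any hypothetical LOCC protocol distinguishing the three embedded states in $\mathbb{C}^{2^{|S|}}\otimes \mathbb{C}^{2^{|S^c|}}$ would pull back to an LOCC protocol distinguishing the three Bell states on $\mathbb{C}^2\otimes\mathbb{C}^2$, a contradiction. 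This forces the three-state subset, and therefore the full set of $2^{N-1}+1$ states (since a protocol distinguishing the whole set would in particular distinguish any subset), to be locally indistinguishable across $S|S^c$. Running the argument over every bipartition yields strong nonlocality per Definition 1.

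The main obstacle is really the verification side of the reduction: one must check that $|\psi_S\rangle$ always belongs to the set (3) for every admissible bipartition, and that local indistinguishability on a $2\otimes 2$ subspace lifts to the ambient bipartite system $\mathbb{C}^{2^{|S|}}\otimes \mathbb{C}^{2^{|S^c|}}$. The first is a one-line check using $1\in S$; the second is a standard application of local isometries. Neither point is deep, but together they are what make the three-Bell-state obstruction behind Theorem 1 available uniformly across all $2^{N}-2$ bipartitions of the $N$ qubits, and so reduce the $N$-qubit statement to its $2\otimes 2$ prototype.
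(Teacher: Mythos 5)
Your proposal is correct and follows essentially the same route as the paper's proof: for each bipartition the paper likewise selects $|\phi_{1,2}\rangle$ together with the one state of (3) whose bit string is constant on each side of the cut, observes that these three states are locally equivalent to three Bell states, and invokes the LOCC-indistinguishability of any three Bell states. Your write-up merely makes explicit two points the paper leaves implicit, namely that the required state genuinely belongs to the set (3) (via the permutation-symmetry/``$1\in S$'' check) and that the $2\otimes 2$ obstruction lifts to the ambient bipartite system through local isometries.
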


\begin{proof}
First, we consider any bipartition $A_{1} \ldots A_{j}|A_{j+1} \ldots A_{N}$. Physically this means that the subsystems $A_{1}, \ldots, A_{j}$ are treated together as a $2^{j}$-dimensional subsystem $A_{1} \ldots A_{j}$ on which joint measurements are now allowed. Others parties are similar. Here $A_{1}, \ldots, A_{N}$ can be any parties. Then, there must be one state in which the first $j$ parties are all in state $|0\rangle$ (or $|1\rangle$) and simultaneously the last $N-j$ parties are all in state $|1 \rangle$ (or $|0\rangle$), thus we can rewrite the state as $|\phi \rangle=|0_{j}\rangle |1_{N-j}\rangle +|1_{j} \rangle |0_{N-j}\rangle$ under new basis. Similar to Theorem 1, we consider the state $|\phi \rangle$ and $|\phi_{1,2}\rangle$ which are locally equivalent to three Bell states $|00\rangle \pm |11\rangle$, $|01\rangle + |10\rangle$. Thus, the states (3) cannot be locally distinguished by LOCC in the bipartition $A_{1} \ldots A_{j}|A_{j+1} \ldots A_{N}$.

Therefore, the states (3) are strongly nonlocal.
This completes
the proof.          
\end{proof}

In addition, the states of (3) are locally reducible in every bipartition. However, we have proved that these states are locally indistinguishable and strongly nonlocal. Thus, our definition should be more general and suitable. In [40], the authors show that such product states cannot be found in systems where one of subsystems has dimension two. However, for entangled states, the minimum dimension can be two according to the above results. In the following, we construct strongly nonlocal maximally entangled states (MESs) in more general quantum systems.

\section{Strong nonlocality of orthogonal maximally entangled states}

In this section, we will present the explicit set of strongly nonlocal MESs in $d \otimes d \otimes d$ and $d \otimes d \otimes \cdots \otimes d$ quantum systems respectively, where $d \geqslant 3$.

\subsection{Strongly nonlocal MESs in tripartite quantum systems}
To clearly explain the general strong nonlocality in tripartite quantum systems, we need start with the construction of strongly nonlocal MESs in $3 \otimes 3 \otimes 3$ quantum system. 

\begin{lemma}
	In $3 \otimes 3 \otimes 3$ quantum system, the following $6$ MESs
	\begin{eqnarray}
	|000\rangle + |111\rangle + |222\rangle,\nonumber \\
	|000\rangle + \omega |111\rangle + \omega^2 |222\rangle, \nonumber \\
	|000\rangle + \omega^2 |111\rangle + \omega |222\rangle,\nonumber \\
	|100\rangle + |211\rangle + |022\rangle,\nonumber \\
	|010\rangle + |121\rangle + |202\rangle,\nonumber \\
	|001\rangle + |112\rangle + |220\rangle. \nonumber 
	\end{eqnarray} 
	are strongly nonlocal, where $\omega=e^{\frac{2\pi i}{3}}$.
\end{lemma}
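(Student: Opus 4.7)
My plan exploits a cyclic symmetry of the six-state set and reduces the problem to a known instance of four generalized Bell states in $3\otimes 3$. First I observe that the set $\{|\phi_1\rangle,\ldots,|\phi_6\rangle\}$ is invariant under the cyclic relabeling of parties $(A,B,C)\mapsto(B,C,A)$: the three diagonal states $|\phi_1\rangle,|\phi_2\rangle,|\phi_3\rangle$ are individually fixed (they involve only the terms $|iii\rangle$), while $|\phi_4\rangle,|\phi_5\rangle,|\phi_6\rangle$ are cyclically permuted among themselves. Any LOCC protocol distinguishing the set in one bipartition therefore transports to one in any other, so it suffices to prove local indistinguishability in the single bipartition $A|BC$.

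In this bipartition I view $BC$ as a nine-dimensional party and introduce the three-dimensional subspace $V:=\operatorname{span}\{|00\rangle,|11\rangle,|22\rangle\}_{BC}$. Direct inspection shows that $|\phi_1\rangle,\ldots,|\phi_4\rangle$ all lie in $\mathcal{H}_A\otimes V$; under the natural identification $|ii\rangle_{BC}\leftrightarrow|i\rangle'_{BC}$ this subspace becomes an ordinary $3\otimes 3$ bipartite system, and the four states become the generalized Bell states associated with the Weyl operators $I$, $Z$, $Z^2$, $X$ on $\mathbb{C}^3$ (where $Z|i\rangle=\omega^i|i\rangle$ and $X|i\rangle=|i+1\rangle$). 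Because any LOCC that discriminates the full six-state set must, on inputs restricted to $\{|\phi_1\rangle,\ldots,|\phi_4\rangle\}$, also discriminate this four-element subset (the remaining six $BC$-dimensions acting as local ancilla on the $BC$ side), the lemma reduces to proving that these four generalized Bell states cannot be distinguished by LOCC in $3\otimes 3$.

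I would establish the reduced claim by the orthogonality-preserving measurement argument used in the proof of Theorem 1. Any first POVM element $E$ on $A$ must satisfy $\langle\phi_k|(E\otimes I)|\phi_l\rangle=0$ for all $k\neq l$; the three pairs drawn from $\{1,2,3\}$ give a discrete Fourier system on the diagonal entries of $E$ forcing $e_{00}=e_{11}=e_{22}$, while the pairs $(1,4),(2,4),(3,4)$ give a second Fourier system forcing $e_{01}=e_{12}=e_{20}=0$; Hermiticity kills the remaining off-diagonals and yields $E\propto I_3$. An analogous computation on $BC$ forces any orthogonality-preserving POVM element $F$ to act as $\alpha I_V$ on $V$ and to have no matrix elements between $V$ and $V^\perp$. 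Consequently, after any first measurement, $|\phi_1\rangle,\ldots,|\phi_4\rangle$ pass through unchanged up to a common scalar and a local unitary on $BC$, and the same argument reapplies at every subsequent round; the four-state subset — and a fortiori the six-state set — is never resolved in $A|BC$. Combined with the cyclic symmetry, this yields strong nonlocality.

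The main technical obstacle I anticipate is the $BC$-side computation: $F$ is a $9\times 9$ Hermitian matrix with many free entries, and the orthogonality pairs drawn from $\{|\phi_1\rangle,\ldots,|\phi_4\rangle\}$ alone only pin down its $V$-block. What rescues the argument is that the cross-pair constraints from $|\phi_5\rangle$ (supported in $\operatorname{span}\{|10\rangle,|21\rangle,|02\rangle\}$) and $|\phi_6\rangle$ (supported in $\operatorname{span}\{|01\rangle,|12\rangle,|20\rangle\}$) produce precisely the additional Fourier systems that kill the $V\leftrightarrow V^\perp$ mixing blocks of $F$, which is what ensures the four-MES structure persists under iteration. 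Carefully verifying those additional constraints and checking that the inductive step closes cleanly across rounds is the piece that requires real bookkeeping.
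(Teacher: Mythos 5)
Your core argument is the same as the paper's: for the bipartition $A|BC$ you fuse $B$ and $C$, observe that $|\phi_1\rangle,\ldots,|\phi_4\rangle$ live in $\mathcal{H}_A\otimes V$ with $V=\mathrm{span}\{|00\rangle,|11\rangle,|22\rangle\}$, identify them with the four generalized Bell states $I,Z,Z^2,X$ in an effective $3\otimes 3$ system, and invoke subset distinguishability plus local-isometry invariance; the other two cuts follow by the cyclic symmetry (the paper instead handles them explicitly with $|\phi_5\rangle$ and $|\phi_6\rangle$, which is the same thing). The only divergence is the last step: the paper simply cites the known fact that any $d+1$ orthogonal MESs in $d\otimes d$ are LOCC-indistinguishable, whereas you re-derive it with an orthogonality-preserving-POVM rigidity argument. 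Your $A$-side computation is correct and complete.

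Where you go astray is the $BC$ side, and the misstep is twofold. First, once you have genuinely reduced to four generalized Bell states in $3\otimes 3$, the second party is three-dimensional and the same two Fourier systems pin its POVM element to a multiple of $I_3$; there is no $V^{\perp}$ left to worry about, so the ``main technical obstacle'' you flag only exists because you slide back into the unreduced $3\otimes 9$ picture. Second, if you do insist on working in $3\otimes 9$, your proposed fix does not do what you claim: the cross constraints $\langle\phi_j|(I\otimes F)|\phi_5\rangle=0$, $j=1,\ldots,4$, only annihilate entries of the form $\langle ii|F|i+1,i\rangle$ and one further linear combination, not the full $V\leftrightarrow V^{\perp}$ blocks of the $9\times 9$ matrix $F$. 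Fortunately those blocks are irrelevant: the constraints among $|\phi_1\rangle,\ldots,|\phi_4\rangle$ alone force $\langle v|F|v\rangle=\alpha\|v\|^2$ for all $v\in V$, so any Kraus operator $M$ with $M^{\dagger}M=F$ restricts to $\sqrt{\alpha}$ times an isometry on $V$, the four post-measurement states remain four generalized Bell states on an isometric copy of $\mathcal{H}_A\otimes\mathbb{C}^3$, and the induction closes without ever touching $|\phi_5\rangle$ or $|\phi_6\rangle$. So the lemma does follow from your outline, but you should either commit fully to the $3\otimes 3$ reduction (cleanest, and what the paper does) or replace the claimed cancellation of the mixing blocks with the scaled-isometry observation above.
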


\begin{proof}
From the definition of strong nonlocality, if we prove the nonlocality in every bipartite separation, i.e., $A|BC$, $B|AC$ and $C|AB$,  then the above $6$ MESs have strong nonlocality.

In $A|BC$ separation, we set last two $|00\rangle$ as $|0_2\rangle$, $|11\rangle$ as $|1_2\rangle$, $|22\rangle$ as $|2_2\rangle$, thus the first four states of the above set can be rewritten as 
	\begin{eqnarray}
	|00_2\rangle + |11_2\rangle + |22_2\rangle,\nonumber \\
	|00_2\rangle + \omega |11_2\rangle + \omega^2 |22_2\rangle, \nonumber \\
	|00_2\rangle + \omega^2 |11_2\rangle + \omega |22_2\rangle,\nonumber \\
	|10_2\rangle + |21_2\rangle + |02_2\rangle,\nonumber
	\end{eqnarray} 
These four states can be regarded as in a new $3 \otimes 3$ quantum system, in which the computational basis is $\{ |00_2\rangle, |01_2\rangle, |02_2\rangle, |10_2\rangle, |11_2\rangle, 12_2\rangle, |20_2\rangle, |21_2\rangle, |22_2\rangle \}$. Based on the result that any $d+1$ MESs cannot be distinguished by LOCC in [18], the set of these four states has quantum nonlocality in this $A|BC$ separation.

Similarly, we can prove the set including the first three and the fifth states cannot be locally distinguished in $B|AC$ separation, and the set including the first three and the sixth states cannot be locally distinguished in $C|AB$ separation.
\end{proof}

It is not hard to construct strongly nonlocal MESs sets in a most general tripartite quantum system according to the idea in the above proof. Thus we can derive the following theorem.

\begin{theorem}
	In $d \otimes d \otimes d$ quantum system, where $d \geqslant 3$, the following $d+3$ MESs
	\begin{eqnarray}
	\sum_{l=0}^{d-1} \omega^{jl} |lll\rangle, j=0,1,\cdots, d-1, \nonumber \\	
	|100\rangle + |211\rangle + \cdots + |0,d-1,d-1\rangle,\nonumber \\
	|010\rangle + |121\rangle + \cdots + |d-1,0,d-1\rangle,\nonumber \\
	|001\rangle + |112\rangle + \cdots + |d-1,d-1,0\rangle, \nonumber 
	\end{eqnarray} 
	are strongly nonlocal, where $\omega=e^{\frac{2\pi i}{d}}$.
\end{theorem}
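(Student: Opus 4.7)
The plan is to generalize the argument of Lemma~1 from $d=3$ to arbitrary $d\geq 3$. By Definition~1, strong nonlocality demands local indistinguishability in each of the three bipartitions $A|BC$, $B|AC$ and $C|AB$. Since the three shift states are related by cyclic permutation of the parties, it suffices to establish indistinguishability across $A|BC$ and then invoke cyclic symmetry for the other two bipartitions.

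For the bipartition $A|BC$, I would first regroup the $BC$ system by introducing enlarged symbols $|l_{2}\rangle := |ll\rangle_{BC}$ for $l=0,1,\ldots,d-1$. These vectors are orthonormal in $\mathcal{H}_{B}\otimes\mathcal{H}_{C}$ and span a $d$-dimensional subspace $\mathcal{S}\subset\mathcal{H}_{B}\otimes\mathcal{H}_{C}$. Next I would restrict attention to the subset of $d+1$ states consisting of the $d$ generalized GHZ vectors $\sum_{l}\omega^{jl}|lll\rangle$ together with the $A$-shift state $|100\rangle+|211\rangle+\cdots+|0,d-1,d-1\rangle$. The $BC$-part of each of these states is supported entirely in $\mathcal{S}$, so they may all be viewed inside the effective $d\otimes d$ system $\mathcal{H}_{A}\otimes\mathcal{S}$, with $\{|l_{2}\rangle\}_{l=0}^{d-1}$ playing the role of the second party's computational basis.

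The next step is to check that in this effective $d\otimes d$ system each of the $d+1$ states is a maximally entangled state, i.e.\ has Schmidt vector $(1/\sqrt{d},\ldots,1/\sqrt{d})$. For the GHZ-type states this is immediate from the form $\sum_{l}\omega^{jl}|l\rangle|l_{2}\rangle$; for the $A$-shift state the Schmidt decomposition is already diagonal, with the $A$-kets $|1\rangle,|2\rangle,\ldots,|0\rangle$ orthonormal and the $BC$-kets $|0_{2}\rangle,|1_{2}\rangle,\ldots,|(d-1)_{2}\rangle$ orthonormal. I can then invoke the result of reference [18], which asserts that any $d+1$ MESs in a $d\otimes d$ system are LOCC-indistinguishable, to conclude that the $d+1$ selected states cannot be distinguished by LOCC inside $\mathcal{H}_{A}\otimes\mathcal{S}$. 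Any hypothetical LOCC protocol distinguishing the full collection of $d+3$ states across $A|BC$ would, in particular, distinguish this $(d+1)$-element subset, and since the subset is supported in $\mathcal{H}_{A}\otimes\mathcal{S}$ the protocol would descend to a LOCC discrimination in the effective $d\otimes d$ system, contradicting [18]. Cyclic permutation of the three parties, together with the manifest cyclic symmetry of the three shift states, then transports the same argument to $B|AC$ (using the $B$-shift state) and $C|AB$ (using the $C$-shift state).

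The principal obstacle I anticipate is this last reduction: justifying that a LOCC protocol acting on the ambient $d\otimes d^{2}$ bipartition, when restricted to states supported in the $d$-dimensional subspace $\mathcal{S}$, really does inherit a valid LOCC protocol on $\mathcal{H}_{A}\otimes\mathcal{S}$ to which the result of [18] applies. Everything else amounts to verifying Schmidt coefficients and book-keeping, and is a routine generalization of the calculation performed for $d=3$ in Lemma~1.
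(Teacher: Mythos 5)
Your proposal is correct and follows essentially the same route as the paper, which proves Theorem~3 simply by generalizing the $d=3$ argument of Lemma~1: for each bipartition, group the $d$ GHZ-type states with the matching shift state, view them as $d+1$ orthogonal MESs in an effective $d\otimes d$ system, and invoke the result of [18]. The reduction you flag as a potential obstacle is standard and unproblematic in the direction needed here --- indistinguishability of the $d+1$ MESs in the abstract $d\otimes d$ system lifts to the ambient $d\otimes d^{2}$ bipartition because the merged party can locally embed $\mathbb{C}^{d}$ into $\mathcal{S}\subset\mathcal{H}_{B}\otimes\mathcal{H}_{C}$ by an isometry, so any ambient distinguishing protocol would yield one for the $d\otimes d$ problem.
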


Actually in $d \otimes d$ quantum system, any $d+1$ MESs have already been used to represent quantum nonlocality [18]. Here from this theorem, we know the fact that adding just $2$ more MESs to the nonlocal MESs can realize the strong nonlocality in $d \otimes d \otimes d$ quantum system.

\subsection{Strongly nonlocal MESs in more than tripartite quantum systems}

When the number of quantum subsystems is bigger than $3$, the construction of strongly nonlocal MESs will become a little different. In this subsection, we will present the explicit form of strongly nonlocal MESs in more than tripartite quantum systems.

\begin{theorem}
	In a $k$-partite quantum system $d \otimes d \otimes \cdots \otimes d$, where $k \geqslant 4, d \geqslant 3$, the following $(k+1)d$ MESs
	\begin{eqnarray}
	\sum_{l=0}^{d-1} \omega^{jl} |ll \cdots l\rangle, j=0,1,\cdots, d-1, \nonumber \\	
	\sum_{l=0}^{d-1} \omega^{jl} |l\oplus_d1,l, \cdots, l\rangle, j=0,1,\cdots, d-1, \nonumber \\	
	\sum_{l=0}^{d-1} \omega^{jl} |l,l\oplus_d1, \cdots, l\rangle, j=0,1,\cdots, d-1, \nonumber \\	
	\vdots \nonumber \\
	\sum_{l=0}^{d-1} \omega^{jl} |l,l, \cdots, l\oplus_d1\rangle, j=0,1,\cdots, d-1, \nonumber
	\end{eqnarray} 
	are strongly nonlocal, where $\omega=e^{\frac{2\pi i}{d}}$.
\end{theorem}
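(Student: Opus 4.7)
The plan is to adapt the bipartition-by-bipartition argument of Lemma 1 and Theorem 2 to the $k$-partite setting. The key tool remains the Ghosh et al.\ result cited as [18]: any $d+1$ mutually orthogonal maximally entangled states in a $d\otimes d$ system cannot be distinguished by LOCC. For each bipartition of $d\otimes d\otimes\cdots\otimes d$ I aim to exhibit a subset of at least $d+1$ of the $(k+1)d$ given MESs that sits inside an effective $d\otimes d$ subsystem as MESs, so that [18] yields local indistinguishability across that bipartition; strong nonlocality then follows from Definition 1.

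First I would fix an arbitrary bipartition $S\,|\,\bar S$ of the $k$ parties and coarse-grain each side via the identifications $|\tilde l\rangle_S:=|ll\cdots l\rangle_S$ and $|\tilde l\rangle_{\bar S}:=|ll\cdots l\rangle_{\bar S}$. Under this identification the $d$ diagonal MESs become generalized Bell states $\sum_l\omega^{jl}|\tilde l\rangle_S|\tilde l\rangle_{\bar S}$ inside the $d\otimes d$ subspace $V_S^{(0)}\otimes V_{\bar S}^{(0)}$, while each shifted-at-$p$ MES is a Bell-like state inside the ``off-diagonal'' subspace $V_S^{(p)}\otimes V_{\bar S}^{(0)}$ if $p\in S$, or $V_S^{(0)}\otimes V_{\bar S}^{(p)}$ if $p\in\bar S$. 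When the bipartition has a singleton side, say $S=\{p\}$, the shifted-at-$p$ subspace coincides with the full diagonal subspace on the $S$-side, so the $d$ diagonal MESs together with the $d$ shifted-at-$p$ MESs furnish $2d\ge d+1$ MESs inside one and the same $d\otimes d$ effective system, and [18] applies verbatim. Exploiting the permutation symmetry of the construction then covers every bipartition in which one of the sides contains exactly one party.

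The main obstacle, and the reason Theorem 3 must be treated separately from the tripartite Theorem 2, is handling bipartitions in which both $S$ and $\bar S$ contain at least two parties, a regime that only opens up once $k\ge 4$. In that regime the shifted subspace $V_S^{(p)}$ is orthogonal to the diagonal subspace $V_S^{(0)}$, so the naive embedding places the combined ``diagonal plus shifted-at-$p$'' MESs in a $2d\otimes d$ block rather than a pure $d\otimes d$ block, and [18] is not directly applicable. To close this case I would combine the diagonal MESs with shifted MESs at positions chosen on both sides of the bipartition and apply a local unitary change of basis designed to compress the resulting joint support back onto a $d$-dimensional factor on each side while preserving maximal entanglement; this would yield an effective $d\otimes d$ system containing at least $d+1$ MESs from the prescribed set, after which [18] concludes local indistinguishability. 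The role of keeping all $d$ values of $j$ at every position, rather than only the $j=0$ representative of Theorem 1, is precisely to supply the extra states needed to realize such a compression uniformly in the bipartition. Verifying that this compression works in every bipartition is the technically most delicate step of the proof, and once it is done strong nonlocality follows by ranging the argument over all bipartitions.
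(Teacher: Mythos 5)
Your handling of the singleton bipartitions matches the paper's: for $P_i \mid \mathrm{rest}$ the $d$ diagonal states together with the $d$ states shifted at position $i$ all sit inside the effective $d\otimes d$ system spanned by $\{|l\rangle_{P_i}\}\otimes\{|l\cdots l\rangle\}$, and the result of [18] applies. The gap is exactly where you locate it, in the bipartitions $S\mid\bar S$ with $|S|,|\bar S|\geqslant 2$, but it is worse than a verification left undone: the ``local unitary compression'' you propose cannot exist. A local unitary preserves dimension, and on the $S$ side the diagonal support $V_S^{(0)}=\mathrm{span}\{|l\cdots l\rangle_S\}$ and each shifted support $V_S^{(p)}=\mathrm{span}\{|l,\dots,l\oplus_d 1,\dots,l\rangle_S\}$ with $p\in S$ are mutually orthogonal $d$-dimensional subspaces (this uses $d\geqslant 3$ when $|S|=2$), so no injective local map can place the diagonal states and the shifted-at-$p$ states into one common $d$-dimensional factor. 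There is no choice of $d+1$ states from the given set that embeds as maximally entangled states of a single effective $d\otimes d$ system across such a cut.

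Moreover, no repair along these lines is available, because across such a bipartition the set is in fact LOCC-distinguishable: the $S$ side performs the nondisturbing projective measurement with projectors onto $V_S^{(0)}$ and the $V_S^{(p)}$, the $\bar S$ side does likewise, and every branch is then left with the $d$ states $\sum_l\omega^{jl}|\tilde l\rangle_S|\tilde l\rangle_{\bar S}$, $j=0,\dots,d-1$, i.e.\ $Z$-type generalized Bell states in an effective $d\otimes d$ system, which are perfectly distinguished by Fourier-basis measurements on the two sides. You should also know that the paper's own proof of this case does not close the gap either: it recasts $P_1P_2\mid P_3\cdots P_k$ as a $3d\otimes(k-1)d$ system and asserts that $(k-1)d+1$ states suffice by [18], but the given states have Schmidt rank $d$ across every cut and so are not maximally entangled in that enlarged effective system, so [18] is not applicable there. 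Your proposal at least names the obstruction honestly; neither your compression step nor the paper's dimension count overcomes it, and the balanced bipartitions appear to contradict the theorem as stated rather than to demand a cleverer embedding.
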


\begin{proof}
	To prove the strong nonlocality of these $(k+1)d$ MESs, we need to prove they are nonlocal in every separation. Next we will complete this proof case by case.
	
	In ``1|(k-1)'' separation, without loss of generality, we will take $P_1 | P_2 \cdots P_{k}$ as an example. The first $d+1$ MESs in the above set can be rewritten as 
	\begin{eqnarray}
	\sum_{l=0}^{d-1} \omega^{jl} |ll_{k-1}\rangle, j=0,1,\cdots, d-1, \nonumber \\	
	\sum_{l=0}^{d-1}  |l\oplus_d1,l_{k-1}\rangle, \nonumber \\	
	\end{eqnarray} 
	where $| l_{k-1}\rangle$ denotes $| l \rangle ^{\otimes (k-1)}$. Then these $d+1$ MESs can be regarded as in another new $d \otimes d$ quantum system where the computational basis of the second subsystem is $\{ | l_{k-1} \rangle \}_{l=0}^{d-1}$. Thus these MESs have nonlocality in $P_1 | P_2 \cdots P_{k}$ separation. Similarly, we can prove quantum nonlocality in other ``1|(k-1)'' separations.
	
	In ``2|(k-2)'' separation, the number of new basis states increases. Taking the $P_1 P_2 | P_3 \cdots P_{k}$ separation as an example, the new computational basis for ``$P_1 P_2$'' is $\{ |jj\rangle, |j \oplus_d 1,j \rangle, |j, j\oplus_d 1 \} _{j=0}^{d-1}$, that is, ``$P_1 P_2$'' can be regarded as a new $3d$-dimension subsystem. Meanwhile, the latter subsystem  ``$P_3 \cdots P_{k}$'' can be regarded as a new $(k-1)d$-dimension subsystem because its computational basis is $\{ |jj \cdots j \rangle, |j\oplus_d 1,j,\cdots, j \rangle, |j,j,\cdots, j\oplus_d 1 \rangle \}_{j=0}^{d-1}$. Thus the new dimension for $P_1 P_2 | P_3 \cdots P_{k}$ separation is $3d \otimes (k-1)d$, where $k \geqslant 4$. Then we need $(k-1)d+1$ MESs to illustrate the quantum nonlocality in this separation, which can be ensured by the fact that we have $(k+1)d$ MESs in the original set. Other ``2|(k-2)'' separations can be similarly proved.
	
	Next, we will consider the ``3|(k-3)'' separation. Actually, the number of new basis states is no more than `2|(k-2)'' separation, or even exact the symmetric case of the former cases, so the original $(k+1)d $ MESs can also assure the quantum nonlocality in every ``3|(k-3)'' separation. 
	
	The following ``m|(k-m)'' separation with $m \geqslant 4$ will become easier, or even exact the symmetric case of the former cases. Until now, we have proved the quantum nonlocality in every possible separations, which exactly satisfies the definition of quantum strong nonlocality.	
\end{proof}

\section{Conclusion and Discussion}

In summary, we have shown the definition for strong nonlocality of orthogonal quantum states and constructed some sets of strongly nonlocal orthogonal quantum states in $d\otimes d\otimes \ldots \otimes d$, thus extending the concept of strong nonlocality. Our results can lead to a better understanding of the relationship between nonlocality and entanglement. In addition, for orthogonal product states, there are some locally distinguishing protocols with entanglement resource [42-47]. However, for entangled states, it is very few, so it is interesting to investigate the less entanglement resource required to distinguish entangled states especially the above entangled states.

However, for a more than tripartite quantum system, the definition of nonlocality is not complete. For example, in $d_{A}\otimes d_{B}\otimes d_{C} \otimes d_{D}$ quantum system, (i) when a set of orthogonal quantum states is locally indistinguishable in a 4-partition $A|B|C|D$, we know that these states have nonlocality that we presently understand; (ii) when a set of orthogonal quantum states is locally indistinguishable in every bipartition (e.g., $AB|CD$ and $ABC|D$), these states have strong nonlocality, such as our results in above section; but (iii) when a set of orthogonal quantum states is locally indistinguishable in every tripartition (e.g., $AB|C|D$ and $BD|A|C$), but locally distinguishable in some bipartition, the nonlocality of these states should be defined. 

As [41, 42], we also can classify the different strength of nonlocality of orthogonal quantum states based on the local indistinguishability. Here, we use $\mathcal{N}$ to indicate the strength of nonlocality of a set of orthogonal quantum states and get the relationship as follows.

\begin{eqnarray}
\label{eq.2}
\begin{split}
\mathcal{N}_{2}>\cdots>\mathcal{N}_{i+}>\mathcal{N}_{i}>\cdots\\
>\mathcal{N}_{n+}>\mathcal{N}_{n}, i=3, \cdots, n-1,
\end{split}
\end{eqnarray}
where $\mathcal{N}_{j}$, $j=2,\cdots ,n$, means that a set of orthogonal quantum states is only locally indistinguishabel in every $j$-partition and $\mathcal{N}_{j+}$, $j=3,\cdots ,n$, means that a set of orthogonal quantum states is locally indistinguishable in every $j$-partition and also locally indistinguishable in only some $(j-1)$-partition.

Therefore, from the above relationship, we can present the definition for strong nonlocality. In the following, super-LOCC means that there are at least two parties treated together as a subsystem on which joint measurements are allowed, and the $n$-parties are at least divided into $2$ parts.
\begin{definition}
In an $n$-partite quantum system, where $n>2$, a set of orthogonal quantum states is strongly nonlocal if it cannot be perfectly distinguished by super-LOCC.
\end{definition}

From the Definition 2, we know that super-LOCC is more powerful than LOCC, but less powerful than global operations. Thus the definition should be more general and appropriate.

In addition, we recently find that the authors in Ref. [48] have also presented a class of strong nonlocality of entangled states based on the local irreducibility in $d \otimes d \otimes d, d\geqslant 3$, and in their construction, strong nonlocality needs almost $d^3$ entangled states. But in our construction method, because we use local indistinguishability, it only needs $d+3$ entangled states to show our new-defined strong nonlocality, and generalize the construction in $d\otimes d\otimes \ldots \otimes d, d\geqslant 2$.

\begin{acknowledgments}
The authors are grateful for the anonymous referees' suggestions to improve the quality of this paper. This work was supported by the Beijing Natural Science Foundation (Grant No. 4194088), the NSFC (Grants No. 61901030, No. 61801459, No. 61701343 and No. 11847210), the National Postdoctoral Program for Innovative Talent (Grant No. BX20180042 and No. BX20190335), the China Postdoctoral Science Foundation (Grant No. 2018M640070), and the Anhui Initiative in Quantum Information Technologies (Grant No. AHY150100).
\end{acknowledgments}

\nocite{*}

\bibliography{apssamp}

\end{document}